\DeclareMathOperator{\Tr}{Tr}
\DeclareMathOperator{\Rank}{Rank}
\newcommand{\subparagraph}{}
\theoremstyle{remark}
\newtheorem{remark}{Remark}
\newtheorem{thm}{Theorem}
\begin{document}
	\captionsetup[figure]{labelformat={default},labelsep=period,name={Fig.}}
\title{Resource Allocation for Secure Multi-User Downlink MISO-URLLC Systems} 
\author{ Walid R. Ghanem,
Vahid Jamali, and Robert Schober  \\
Friedrich-Alexander-University Erlangen-Nuremberg, Germany \quad}
\maketitle
\begin{abstract}
In this paper, we study resource allocation algorithm design for secure multi-user downlink ultra-reliable low latency communication (URLLC). To enhance physical layer security (PLS), the base station (BS) is equipped with multiple antennas and artificial noise (AN) is injected by the BS to impair the eavesdroppers' channels. To meet the stringent delay requirements in secure URLLC systems, short packet transmission (SPT) is adopted and taken into consideration for resource allocation design. The resource allocation algorithm design is formulated as an optimization problem for minimization of the total transmit power, while guaranteeing quality-of-service (QoS) constraints regarding the URLLC users' number of transmitted
bits, packet error probability, information leakage, and delay. Due to the non-convexity of the optimization problem, finding a global solution entails a high computational complexity. Thus, we propose a low-complexity algorithm based successive convex approximation (SCA) to find a sub-optimal solution. Our simulation results show that the proposed resource allocation algorithm design ensures the secrecy of the URLLC users' transmissions, and yields significant power savings compared to a baseline scheme. 
\end{abstract}
\section{Introduction} 
 Recently, ultra-reliable low latency communication (URLLC) has received considerable attention from academia and industry. URLLC is required for mission critical applications, such as factory automation, autonomous driving, tactile internet, e-health, and virtual reality \cite{Toward}. URLLC entails strict quality-of-service (QoS) requirements including a low packet error probability (e.g., $10^{-6}$) and a very low latency (e.g., $1\,$ms) \cite{Toward}. In addition, the data packet size is typically small, e.g., around 160~bits \cite{Popovski1}. Unfortunately, existing mobile communication systems cannot meet these requirements. For example, for the long term evolution (LTE) system, the frame duration is 10$\,$ms, which already exceeds the total latency requirement of URLLC applications \cite{Mehdi1}. The main challenges for the design of URLLC systems are the two contradicting requirements of ultra high reliability and low latency.

Furthermore, security is a fundamental issue in wireless communication systems due to the broadcast nature of the wireless medium. Traditionally, communication security was provided by cryptographic encryption methods in the application layer. However, cryptographic methods rely on the assumption that the computational power of eavesdroppers is limited. Considering recent advances in quantum computing, this assumption may no longer be justified. An alternative approach to ensuring secrecy in communication systems is physical layer security (PLS). The principle of PLS is to exploit the physical characteristics of the wireless channel to provide perfect communication secrecy independent of the computational capabilities of the eavesdroppers. In \cite{secureofdma}, the authors studied secure resource allocation and scheduling in orthogonal frequency division multiple access (OFDMA) networks. In \cite{misosec1}, the authors investigated the secrecy rate optimization in a multiple-input single-output (MISO) system in the presence of multiple eavesdroppers. The authors in \cite{misoyan} studied resource allocation for MISO systems, where artificial noise (AN) was injected at the base station (BS) to further enhance PLS. However, the schemes proposed in \cite{secureofdma,misosec1,misoyan} were designed based on the secrecy capacity which assumes infinite length codes \cite{Wyner1}. Hence, such codes are not applicable for URLLC, since URLLC employs short packet transmission (SPT) to achieve low latency.    

Recently, there has been a significant amount of work on the performance limits of SPT. These performance limits provide a relation between the achievable rate, the packet length, and the decoding error probability \cite{strassen,thesis}. The seminal work in \cite{strassen} investigated the limits of SPT for discrete memoryless channels, while the authors in \cite{Polyanskiy} extended this analysis to different types of channels, including the additive white Gaussian noise (AWGN) channel. The authors in \cite{optimal,convexfinite,ghanem1,ghanem2} investigated the resource allocation algorithm design for multi-user downlink URLLC systems. Nevertheless, the existing URLLC designs in \cite{strassen,thesis,Polyanskiy,Quasi,optimal,convexfinite,ghanem1, ghanem2} do not take into account secrecy, and hence, cannot guarantee PLS. The maximum secrecy rate for SPT  over a wiretap channel was provided recently in \cite{wangsecrecy}. Furthermore, in \cite{short1}, the performance limits of secure SPT were studied. However, the author of \cite{short1} focused on the case of a single user system, i.e., a single legitimate user and single eavesdropper. To the best of the authors' knowledge, the resource allocation design for secure multi-user downlink MISO-URLLC systems has not been investigated in the literature, yet.

 Motivated by the above discussion, in this paper, we propose a novel secure resource allocation algorithm design for multi-user downlink MISO-URLLC systems. The resource allocation algorithm design is formulated as an optimization problem with the goal of minimizing the total transmit power subject to the QoS constraints of the URLLC users. The QoS constraints include the minimum number of securely transmitted bits, the maximum packet error probability, the maximum information leakage, and the maximum time for transmission of a packet (i.e., the maximum delay)\footnote{We note that the end-to-end (E2E) delay of data packet transmission comprises  various components including the transmission delay, queueing delay, propagation delay, and routing delay in the backhaul and core networks. In this work, we focus on the transmission delay, which is independent of the other components of the E2E delay.}. The formulated problem is non-convex and finding the global optimum solution requires high computational complexity. Therefore, we develop a low-complexity resource allocation algorithm based on successive convex approximation (SCA) which finds a sub-optimal solution.

\textit{Notations}: In this paper, lower-case letters refer to scalar numbers, while bold lower and upper case letters denote vectors and matrices, respectively. $\Tr{(\mathbf{A})}$ and $\Rank{(\mathbf{A})}$ denote the trace and the rank of matrix $\mathbf{A}$, respectively. $\mathbf{A}\succeq 0$  indicates that matrix $\mathbf{A}$ is positive semi-definite. $\mathbf{A}^{H}$ and ${\mathbf{A}}^{T}$ denote the Hermitian transpose and the transpose of matrix $\mathbf{A}$, respectively. $\mathbb{C}$ is the set of complex numbers. $\mathbf{I}_{N}$ is the $N \times N$ identity matrix. $\mathbb{H}_{N}$  denotes the set of all $N \times N$ Hermitian matrices. $|\cdot|$ and  $\|\cdot\|$ refer to the absolute value of a complex scalar and the Euclidean vector norm, respectively. The distribution of a circularly symmetric complex Gaussian (CSCG) vector with mean vector $\mathbf{x}$ and and covariance matrix $\mathbf{\Sigma}$  is denoted by $\mathcal{CN}(\mathbf{x},\mathbf{\Sigma})$, and $\sim$ stands for ``distributed as". $\mathcal{E}\{\cdot\}$ denotes statistical expectation. $\nabla_{x}f(\mathbf{x})$ denotes the gradient vector of function $f(\mathbf{x})$ and its elements are the partial derivatives of $f(\mathbf{x})$.
\section{System and Channel Models}	
In this section, we present the considered system and channel models.
\subsection{System Model}
We consider a single-cell multi-user downlink system which comprises a BS equipped with $N_{T}$ antennas, $K$ single-antenna URLLC users indexed by $k =\{1,\dots,K\}$, and $J$ single antenna eavesdroppers indexed by $j =\{1,\dots,J\}$, cf. Fig.~\ref{model}. The total bandwidth is $W$. We assume that a resource frame has a duration of $T_{f}$ seconds and is divided into $N$ time slots indexed by $n =\{1,\dots,N\}$. Each time slot $n$ comprises $\bar{n}$ symbol intervals. The value of $\bar{n}$ depends on the system bandwidth $W$ and the total frame duration $T_{f}$, i.e., $\bar{n}=\frac{WT_{f}}{N}$ which is assumed to be integer. Furthermore, perfect channel state information (CSI) is assumed to be available at the BS for resource allocation design\footnote{
	In practice, the BS may not be able to obtain perfect CSI, especially for the eavesdroppes' channels. Hence, the results in this paper can serve as a performance upper bound for system with imperfect CSI.}. We assume that the maximum affordable delay of each user is known at the BS and only users whose delay requirements can potentially be met in the current frame are admitted into the system.
\begin{figure}
	\centering
	\scalebox{1.4}{
		\pstool{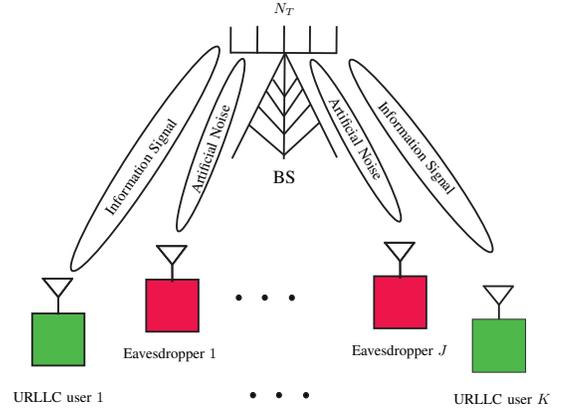}{
		    \psfrag{d}[c][c][0.4]{Meta data}
			\psfrag{a}[c][c][0.4]{$N_{T}$}
		    \psfrag{b}[c][c][0.4]{\text{user 2}}
		    \psfrag{c}[c][c][0.4]{\text{user $K$}}
		    \psfrag{e}[c][c][0.4]{\text{URLLC user $1$}}
		    \psfrag{g}[c][c][0.4]{\text{URLLC user $K$}}
		    \psfrag{m}[c][c][0.5]{BS}
		    \psfrag{f}[c][c][0.4]{Eavesdropper $1$}
		    \psfrag{d}[c][c][0.4]{Eavesdropper $J$}
		    \psfrag{I}[c][c][0.4]{Information Signal}
		    \psfrag{c}[c][c][0.4]{Artificial Noise}
	}}
	\caption{ Multi-user downlink URLLC system with a BS, $K$ URLLC users, and $J$ eavesdroppers.}
	\label{model}
	\vspace{-0.5cm}
\end{figure}

In this paper, we assume linear transmit precoding at the BS. Hence, the signal vector transmitted by the BS in time slot $n$ to the $K$ users is given by
\begin{IEEEeqnarray}{lll}\label{txvector}
	\mathbf{x}[n]=\underbrace{\sum_{k=1}^{K}\mathbf{w}_{k}[n]u_{k}[n]}_{\substack {\text {desired signal}}}+\underbrace{\mathbf{v}[n]}_{\substack {\text {AN}}},
\end{IEEEeqnarray}
where $\mathbf{w}_{k}[n] \in \mathbb{C}^{N_{T} \times 1}$ and $u_{k}[n] \in \mathbb{C}$, $\mathcal{E}\{|u_{k}[n]|^{2}\}=1, \; \forall k=\{1,\dots, K\}$, are the beamforming vector and the transmit symbol of user $k$ in time slot $n$, respectively. Moreover, $\mathbf{v}[n]$ is an AN vector generated by the transmitter to degrade the eavesdroppers' channels. $\mathbf{v}[n]$ is modelled as a complex Gaussian random
vector, $\mathbf{v}[n]\sim \mathcal{CN}(\mathbf{0},\mathbf{V}[n])$,
with covariance matrix $\mathbf{V}[n] \in \mathbb{H}_{N_{T}}$, $\mathbf{V}[n] \succeq 0$. 
\subsection{Channel Model}	\vspace{-0.2cm}
In this paper, we assume a quasi static flat-fading channel whose coherence time exceeds $T_{f}$. Therefore, the channel gains of all users are assumed to be fixed during the entire frame duration. The signals received at user $k$ and eavesdropper $j$ in time slot $n$ are given as follows:
\begin{IEEEeqnarray}{lll}\label{receviedsignal}
	y_{k}[n]=\mathbf{h}^{H}_{k}\mathbf{x}[n]+w_{k}[n], \forall k,
\end{IEEEeqnarray}
\begin{IEEEeqnarray}{lll}\label{eavs}
	y_{j}[n]=\mathbf{g}^{H}_{j}\mathbf{x}[n]+w^{e}_{j}[n], \forall j,
\end{IEEEeqnarray}
where $\mathbf{h}_{k} \in \mathbb{C}^{N_{T} \times 1}$ and $\mathbf{g}_{j} \in \mathbb{C}^{N_{T} \times 1}$ are the channel vectors from the BS to user $k$ and eavesdropper $j$, respectively. Besides, $w_{k}[n]\sim \mathcal{CN}(0,\sigma^{2})$ and $w^{e}_{j}[n]\sim \mathcal{CN}(0,\sigma^{2})$ are the complex AWGN\footnote{ Without loss of generality, we assume that the noise processes at all receivers have identical variances.} at user $k$ and eavesdropper $j$, respectively.
 By substituting (\ref{txvector}) into (\ref{receviedsignal}), the signal received at user $k$ in time slot $n$ is given as follows: 
\begin{IEEEeqnarray}{lll}
	y_{k}[n]=\mathbf{h}_{k}^{H}\left(\sum_{l=1}^{K}\mathbf{w}_{l}[n]u_{l}[n]+\mathbf{v}[n]\right)+w_{k}[n],\\ 
	=\underbrace{\mathbf{h}_{k}^{H}\mathbf{w}_{k}[n]u_{k}[n]}_{\substack {\text {desired signal}}}+\underbrace{\sum_{l\neq k }^{K}\mathbf{h}^{H}_{k}\mathbf{w}_{l}[n]u_{l}[n]}_{\substack {\text {multi-user interference (MUI)}}}+\underbrace{\mathbf{h}_{k}^{H}\mathbf{v}[n]}_{\substack {\text {AN}}}+w_{k}[n].\nonumber
\end{IEEEeqnarray}
 Similarly, the received signal for user $k$ at eavesdropper $j$ in time slot $n$ can be expressed as follows:
\begin{IEEEeqnarray}{lll}
	y_{j,k}[n]=\mathbf{g}_{j}^{H}\left(\sum_{l=1}^{K}\mathbf{w}_{l}[n]u_{l}[n]+\mathbf{v}[n]\right)+w^{e}_{j}[n],\\ 
	=\underbrace{\mathbf{g}_{j}^{H}\mathbf{w}_{k}[n]u_{k}[n]}_{\substack {\text {user's $k$ signal}}}+\underbrace{\sum_{l\neq k }^{K}\mathbf{g}^{H}_{j}\mathbf{w}_{l}[n]u_{l}[n]}_{\substack {\text { MUI}}}+\underbrace{\mathbf{g}_{j}^{H}\mathbf{v}[n]}_{\substack {\text {AN}}}+w^{e}_{j}[n].\nonumber
\end{IEEEeqnarray}
Moreover, for future use, we define the signal-to-interference-plus-noise-ratio (SINR) for user $k$ and eavesdropper $j$ in time slot $n$, respectively, as follows: 
\begin{IEEEeqnarray}{lll}{\label{rho5}}
	\gamma_{k}[n]=\frac{|\mathbf{h}_{k}^{H}\mathbf{w}_{k}[n]|^{2}}{\sum^{K}_{l\neq k}|\mathbf{h}_{k}^{H}\mathbf{w}_{l}[n]|^{2}+\Tr(\mathbf{h}_{k}\mathbf{h}_{k}^{H}\mathbf{V}[n])+\sigma^{2}},
\end{IEEEeqnarray}
\begin{IEEEeqnarray}{lll}{\label{rho6}}
	\gamma_{j,k}[n]=\frac{|\mathbf{g}_{j}^{H}\mathbf{w}_{k}[n]|^{2}}{\sum^{K}_{l\neq k}|\mathbf{g}_{j}^{H}\mathbf{w}_{l}[n]|^{2}+\Tr(\mathbf{g}_{j}\mathbf{g}_{j}^{H}\mathbf{V}[n])+\sigma^{2}}.
\end{IEEEeqnarray}
In this paper, we treat the MUI caused by the signals of other users as noise. Moreover, we assume a normalized channel model such that $\sigma^{2}=1$ holds. 
\section{Resource Allocation Problem Formulation}\vspace{-0.2cm}
In this section, we introduce the achievable secrecy rate for SPT and the QoS requirements of the URLLC users. Furthermore, we formulate the proposed resource allocation optimization problem for secure multi-user downlink MISO-URLLC systems.
\subsection{Achievable Secrecy Rate for SPT}\vspace{-0.2cm}
In their seminal works, Wyner \cite{Wyner1} and Csisz\'ar and K\"orner \cite{Csiszar1} characterized the secrecy capacity. It was shown that both the error probability and the information leakage can be made arbitrarily small as long
as the transmission rate is below the secrecy capacity and the data are mapped to
sufficiently long codewords, i.e., the packet length goes to infinity. Because of the latter condition, the secrecy capacity cannot be used for resource allocation design in secure URLLC systems, as URLLC systems have to employ short packets to achieve low latency. Furthermore, as a result of using short packets, decoding errors and information leakage become unavoidable. The achievable secrecy rate for SPT was analyzed in \cite{wangsecrecy} and a closed-form approximation, the so-called normal approximation, was developed for the AWGN channel. Mathematically, the maximum number of secret communication bits transmitted in a packet of $L$ symbols to a legitimate receiver with error probability $\epsilon$ and information leakage $\delta$ can be approximated as follows\cite[Eq. (108)]{wangsecrecy}:
\begin{IEEEeqnarray}{lll}\label{normalapproximation}
\hspace{-0.5cm}	\bar{B}=L\bar{C}_{s}-aQ^{-1}(\epsilon)\sqrt{LV_{m}}-aQ^{-1}(\delta)\sqrt{LV_{e}},
\end{IEEEeqnarray}
where $a=\log(\text{e})$, and the secrecy capacity is given by
\begin{IEEEeqnarray}{lll}\label{cs}
\bar{C}_{s}=\log_{2}(1+\gamma_{m})-\log_{2}(1+\gamma_{e}).
\end{IEEEeqnarray}
Here, $Q^{-1}(\cdot)$ is the inverse of the Gaussian Q-function,  $Q(x)=\frac{1}{\sqrt{2\pi}}\int_{x}^{\infty}\text{exp}{\left(-\frac{t^{2}}{2}\right)}\text{d}t$, and
$V_{m}$ and $V_{e}$ are the channel dispersions of the legitimate and eavesdropper channels, respectively. For the complex AWGN channel, the channel dispersion is given as follows \cite{thesis}:
\begin{IEEEeqnarray}{lll}\label{dispersion}
	V_{i}=\bigg(1-\frac{1}{(1+\gamma_{i})^2}\bigg), \forall i \in \{m,e\},
\end{IEEEeqnarray} 
where $\gamma_{i}$ is the received SINR at receiver $i$, i.e., at the legitimate receiver or the eavesdropper.

 For (\ref{normalapproximation}), it was assumed that all symbols in the packet have the same SINR. However, for the case where the SINRs of different symbols may not be identical, using a similar approach as in \cite[Eq. (4.277)]{thesis},\cite[Fig. 1]{Erseghe1}, the maximum number of secret communication bits for SPT can be expressed as follows:
\begin{IEEEeqnarray}{lll}\label{secrate}
	\hspace{-0.6cm}	B^{\mathrm{sec}}=LC_{s}-aQ^{-1}(\epsilon)\sqrt{\sum_{l=1}^{L}V_{m}[l]}-aQ^{-1}(\delta)\sqrt{\sum_{l=1}^{L}V_{e}[l]},
\end{IEEEeqnarray}
where the secrecy capacity in this case is given by:
\begin{IEEEeqnarray}{lll}\label{cs}
	C_{s}=\frac{1}{L}\bigg(\sum_{l=1}^{L}\log_{2}(1+\gamma_{m}[l])-\sum_{l=1}^{L}\log_{2}(1+\gamma_{e}[l])\bigg),
\end{IEEEeqnarray} 

Here, $V_{m}[l]$, $V_{e}[l]$, $\gamma_{m}[l]$, and $\gamma_{e}[l]$ denote the channel dispersion for the legitimate receiver, the channel dispersion for the eavesdropper, the SINR at the legitimate receiver, and the SINR at the eavesdropper in symbol interval $l$, respectively.       

In this paper, the resource allocation algorithm design for secure MISO-URLLC is based on (\ref{secrate}).  
\subsection{QoS Requirements of URLLC Users}\vspace{-0.1cm}
The QoS requirements of URLLC user $k$ comprise a minimum number of securely received bits, denoted by $B_{k}^{\text{req}}$, a maximum packet error probability denoted by $\epsilon_{k}$, the maximum number of time slots available for transmission of the user's packet, denoted by $D_{k}$, and the maximum information leakage $\delta_{j,k}$ to eavesdropper $j$. According to (\ref{secrate}), the total number of secret communication bits transmitted over the resources allocated to user $k$ in the $N$ available time slots in the presence of $J$ eavesdroppers can be written as:
\begin{multline}\label{BT}
B^{\mathrm{sec}}_{k}(\mathbf{w}_{k},\mathbf{V})=\bar{n} \sum_{n=1}^{N}\log(1+\gamma_{k}[n])- aQ^{-1}(\epsilon_{k})\bigg(\bar{n}{\sum_{n=1}^{N} V_{k}[n]}\bigg)^{\frac{1}{2}}\\-\hspace{-0.4cm}\max_{j\in\{1,2,\cdots, J\}}\bigg[\bar{n}\sum_{n=1}^{N}\log(1+\gamma_{i,k}[n])
\\+aQ^{-1}(\delta_{j,k})\bigg(\bar{n}{\sum_{n=1}^{N} V_{j,k}[n]}\bigg)^{\frac{1}{2}}\bigg],\hspace{-0.2cm}
\end{multline}
where $\mathbf {V}$ denotes the collection of optimization variables $\mathbf{V}[n], \forall n$, $\mathbf{w}_{k}$ denotes the collection of optimization variables $\mathbf{w}_{k}[n],$ $\forall n$, and
\begin{IEEEeqnarray}{lll}\label{dispersion}
	V_{k}[n]=1-\frac{1}{(1+\gamma_{k}[n])^2},\nonumber	V_{j,k}[n]=1-\frac{1}{(1+\gamma_{j,k}[n])^2}.\nonumber
\end{IEEEeqnarray} 

 Moreover, to meet the delay requirement of user $k$, we assign all symbols of user $k$ to the first $D_{k}$ time slots. This means that the value of $\gamma_{k}[n]$, and thus, the value of $\mathbf{w}_{k}[n]$, should be zero for user $k$ if $n>D_{k}$. In other words, users requiring low latency are assigned resources at the beginning of the frame by controlling the value of $\mathbf{w}_{k}[n], \forall k,n$. Then, a user can start decoding as soon as it has received all symbols that contain its data, i.e., after $D_{k}$ time slots.  
\subsection{Optimization Problem Formulation}\vspace{-0.1cm}
In the following, we formulate the proposed resource allocation optimization problem for the minimization of the total transmit power under constraints on the QoS of each user regarding the received number of secrecy bits, the reliability, the latency, and the information leakage. In particular, the beamforming and the AN policies are determined by solving the following optimization
problem:
\begin{IEEEeqnarray}{lll}\label{op1}& \underset { \mathbf {w}, \mathbf {V} \in \mathbb{H}_{N_{T}}}{ \mathop {\mathrm {minimize}}\nolimits }~ \sum_{k=1}^{K}\sum_{n=1}^{N}ٌٌ\|\mathbf{w}_{k}[n]\|^{2}+\sum_{n=1}^{N}\Tr(\mathbf{V}[n]) \\& \mathrm {s.t.}~\mathrm {C1:}\; \nonumber B^{\mathrm{sec}}_{k}(\mathbf{w}_{k},\mathbf{V}) \geq B^{\text{req}}_{k}, \forall k, \nonumber		
	\\& \;\; \quad \mathrm{C2:}\; \mathbf{w}_{k}[n]=0, \forall n > D_{k}, \forall k,\nonumber
	\\& \;\; \quad \mathrm {C3:}\; \mathbf {V}[n] \succeq 0 , \forall n,\nonumber
 \end{IEEEeqnarray}
where $\mathbf{w}$ denotes the collection of optimization variables $\mathbf{w}_{k}, \forall k$. The optimization problem in (\ref{op1}) is non-convex. The non-convexity arises from constraint $\mathrm{C1}$ which involves the non-convex SINR expressions and the non-convex secrecy rate formula for SPT. In general, there is no systematic method for solving non-convex optimization problems in polynomial time. Nevertheless, in the next section, we propose a low-complexity algorithm to find a sub-optimal solution for optimization problem (\ref{op1}). \vspace{-0.3cm}
\section{Solution of the Problem}\vspace{-0.2cm}
In the following, we propose a low-complexity resource allocation algorithm based on SCA to solve optimization problem (\ref{op1}) and to obtain a sub-optimal resource allocation policy with low computational complexity. The proposed algorithm design tackles problem (\ref{op1}) in three main
steps as outlined in the following. First, we transform the problem into a more tractable form using semi-definite programming (SDP). Second, we apply a series of transformations based on auxiliary variables and Taylor series expansion to obtain an approximated convex problem. Finally, a penalized SCA algorithm is proposed to iteratively solve the approximated convex problem, and to find a sub-optimal  solution for problem (\ref{op1}).    \vspace{-0.2cm}
\subsection{Semi-Definite Programming}\vspace{-0.2cm}
To facilitate solving problem (\ref{op1}) using SDP, we define $\mathbf{W}_{k}[n]=\mathbf{w}_{k}[n]\mathbf{w}_{k}^{H}[n]$, $\mathbf{H}_{k}=\mathbf{h}_{k}\mathbf{h}_{k}^{H}$, and $\mathbf{G}_{j}=\mathbf{g}_{j}\mathbf{g}_{j}^{H}$. Therefore, problem (\ref{op1}) can be rewritten in equivalent form as follows:  
\begin{IEEEeqnarray}{lll}\label{op1a}& \underset { \mathbf {W},\mathbf {V} \in \mathbb{H}_{N_{T}}}{ \mathop {\mathrm {minimize}}\nolimits }~ G(\mathbf {W},\mathbf {V}) \\& \;\;\mathrm {s.t.}~\mathrm {C1:} B^{\mathrm{sec}}_{k}(\mathbf{W}_{k},\mathbf{V}) \geq B^{\text{req}}_{k}, \forall k, \nonumber		
	\\&  \qquad\mathrm {C2:} \Tr{(\mathbf{W}_{k}[n])}=0, \forall n > D_{k},  \forall k\nonumber, 
		\\& \qquad \mathrm {C3:} \mathbf {V}[n] \succeq 0 , \forall n,\nonumber
	\\&\qquad \mathrm {C4:} {\mathbf{W}_{k}}[n]\succeq 0,\forall k,n,\nonumber
	\\&\qquad \mathrm {C5:} \Rank(\mathbf{W}_{k}[n]) \leq 1,\forall k,n,\nonumber
\end{IEEEeqnarray}
where $\mathbf{W}_{k}$ is the collection of optimization variables $\mathbf{W}_{k}[n],\forall n$, and $\mathbf{W}$ is the collection of optimization variables $\mathbf{W}_{k}, \forall k$. ${\mathbf{W}_{k}}[n]\succeq 0$ and $\Rank({\mathbf{W}_{k}[n]})\leq 1$, $\forall k,n,$ in constraints $\mathrm {C4}$ and $\mathrm {C5}$ are imposed to ensure that $\mathbf{W}_{k}[n]=\mathbf{w}_{k}[n]\mathbf{w}_{k}^{H}[n]$ holds after optimization, and 	\vspace{-0.2cm}  
$$ G(\mathbf {W},\mathbf {V})=\sum_{k=1}^{K}\sum_{n=1}^{N}\Tr{(\mathbf{W}_{k}[n])}+\sum_{n=1}^{N}\Tr(\mathbf{V}[n]).$$
Moreover, to facilitate the solution of problem (\ref{op1a}), constraint $\mathrm{C1}$ is rewritten as follows:
  \begin{multline}{\label{c1}}
 \mathrm{C1}: {R}_{k}(\mathbf{W}_{k}, \mathbf{V})-{V}_{k}(\mathbf{W}_{k}, \mathbf{V})-\\ \max_{j\in\{1,2,\cdots, J\}}{C}^{E}_{j,k}(\mathbf{W}_{k}, \mathbf{V})\geq B_{k}^{\text{req}}, \forall k,
   \end{multline}	\vspace{-0.5cm}
   where
\begin{equation}{\label{f1}}\hspace{-1cm}
{R}_{k}(\mathbf{W}_{k}, \mathbf{V})=\bar{n} \sum_{n=1}^{N}\log(1+\gamma_{k}[n]),
   \end{equation}	\vspace{-0.5cm}
      \begin{equation}{\label{v1}}\hspace{-0.6cm}
   {V}_{k}(\mathbf{W}_{k}, \mathbf{V})= Q^{-1}(\epsilon_{k})\bigg(\bar{n}{\sum_{n=1}^N V_{k}[n]}\bigg)^{\frac{1}{2}},
   \end{equation}
   and   	\vspace{-0.2cm}
\begin{multline}{\label{f2}}
 {C}^{E}_{j,k}(\mathbf{W}_{k}, \mathbf{V})=\bar{n}\sum_{n=1}^{N}\log(1+\gamma_{j,k}[n])\\+Q^{-1}(\delta_{j,k})\bigg(\bar{n}{\sum_{n=1}^N V_{j,k}[n]}\bigg)^{\frac{1}{2}},
   \end{multline}
 where	\vspace{-0.4cm}
 \begin{equation}{\label{rho5a}}
 	\gamma_{k}[n]=\frac{\Tr{(\mathbf{H}_{k}\mathbf{W}_{k}[n])}}{\sum^{K}_{k\neq l}\Tr{(\mathbf{H}_{k}\mathbf{W}_{l}[n])}+\Tr(\mathbf{H}_{k}\mathbf{V}[n])+1},
 \end{equation} 
 \begin{equation}{\label{rho6a}}
 	\gamma_{j,k}[n]=\frac{\Tr{(\mathbf{G}_{j}\mathbf{W}_{k}[n])}}{\sum^{K}_{k\neq l}\Tr{(\mathbf{G}_{j}\mathbf{W}_{l}[n])}+\Tr(\mathbf{G}_{j}\mathbf{V}[n])+1}.
 \end{equation}
\subsection{Problem Transformation}
  In the following, we tackle the non-convexity of problem (\ref{op1a}) arising from non-convex constraints $\mathrm{C1}$ and $\mathrm{C5}$. For constraint $\mathrm{C1}$, we first apply a series of transformations based on auxiliary slack variables. Subsequently, Taylor series expansion is used to find a convex approximation for the non-convex parts. For constraint $\mathrm{C5}$, we resort to the well-known semi-definite relaxation (SDR). Using slack variables $\tau_{k}, \forall k$, we rewrite constraint $\mathrm{C1}$ equivalently as follows:
\begin{IEEEeqnarray}{lll}\label{c1d}
	\mathrm{C1a}: R_{k}(\mathbf{W}_{k}, \mathbf{V})-V_{k}(\mathbf{W}_{k}, \mathbf{V})-\tau_{k}\geq B_{k}^{\text{req}}, \forall k,
\end{IEEEeqnarray}	\vspace{-0.7cm}
\begin{IEEEeqnarray}{lll}\label{c1f}\hspace{-2.5cm}
	\mathrm{C1b}:\tau_{k}\geq  {C}^{E}_{j,k}(\mathbf{W}_{k}, \mathbf{V}), \forall j, \forall k.
\end{IEEEeqnarray}
One reason for the non-convexity of constraints (\ref{c1d}) and (\ref{c1f}) is the structure of the SINRs in (\ref{rho5a}) and (\ref{rho6a}). To tackle this non-convexity, we introduce  auxiliary variables $a_{k}[n], \forall k,n,$ and $b_{j,k}[n], \forall j,k,n,$ to bound the SINRs in (\ref{rho5a}) and (\ref{rho6a}), respectively. By substituting $a_{k}[n], \forall k,n,$ and $b_{j,k}[n], \forall j,k,n,$ for $\gamma_{k}[n], \forall k,n, $ and $\gamma_{j,k}[n], \forall j,k,n$, respectively, in functions $R_{k}(\mathbf{W}_{k}, \mathbf{V})$, $V_{k}(\mathbf{W}_{k}, \mathbf{V})$, and $C^{E}_{j,k}(\mathbf{W}_{k}, \mathbf{V})$, we get new functions $R_{k}(\mathbf{a}_{k})$, $V_{k}(\mathbf{a}_{k})$, and $C^{E}_{j,k}(\mathbf{b}_{j,k})$, respectively, where $\mathbf{a}_{k}$ and $\mathbf{b}_{j,k}$ are the collections of optimization variables $a_{k}[n], \forall n$, and $b_{j,k}[n], \forall n$, respectively. This leads to the equivalent optimization problem: \vspace{-0.2cm} 
\begin{IEEEeqnarray}{lll}\label{op1b}& \underset {\mathbf {W}, \mathbf {V} \in \mathbb{H}_{N_{T}}, \boldsymbol{\tau}, \boldsymbol{a}, \boldsymbol{b}}{ \mathop {\mathrm {minimize}}\nolimits }~ G(\mathbf {W},\mathbf {V}) \\& \;\mathrm {s.t.}~\mathrm {C1a:} R_{k}(\mathbf{a}_{k})-V_{k}(\mathbf{a}_{k})-\tau_{k}\geq B_{k}^{\text{req}}, \forall k,\nonumber \\& \qquad \mathrm{C1b:} \tau_{k}\geq  {C}^{E}_{j,k}(\mathbf{b}_{j,k}), \forall j, k, \nonumber		
	\\&\qquad\mathrm {C2:} \Tr{(\mathbf{W}_{k}[n])}=0, \forall n \geq D_{k},  \forall k\nonumber,
			\\& \qquad \mathrm {C3:} \mathbf {V}[n] \succeq 0 , \forall n,\nonumber 
	\\&\qquad \mathrm {C4:} {\mathbf{W}_{k}}[n]\succeq 0,\forall k,n,\nonumber
	\\&\qquad \mathrm {C5:} \Rank(\mathbf{W}_{k}[n]) \leq 1,\forall k,n,\nonumber
	\\&\qquad \mathrm{C6:} a_{k}[n] \leq 	\gamma_{k}[n],\forall k,n,\nonumber
	\\&\qquad \mathrm{C7:} b_{j,k}[n] \geq \gamma_{j,k}[n], \forall j,k,n,\nonumber
\end{IEEEeqnarray}	
where $\boldsymbol{\tau}$, $\boldsymbol{a}$, and $\boldsymbol{b}$ denote the collections of optimization variables $\tau_{k}, \forall k$, $a_{k}[n], \forall k,n,$ and $b_{j,k}[n], \forall j,k,n,$ respectively.  
\begin{thm}
Optimization problems (\ref{op1a}) and (\ref{op1b}) are equivalent and share the same solution for $\mathbf {W}$ and $\mathbf {V}$. 
\end{thm}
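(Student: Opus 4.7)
The plan is to prove equivalence by showing that the feasible regions of (\ref{op1a}) and (\ref{op1b}), projected onto $(\mathbf{W},\mathbf{V})$, coincide; since the objective $G(\mathbf{W},\mathbf{V})$ does not depend on the auxiliary variables $\boldsymbol{\tau}$, $\boldsymbol{a}$, $\boldsymbol{b}$, equality of the projected feasible sets immediately yields equal optimal values and identical optimizers in $(\mathbf{W},\mathbf{V})$.

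For the forward inclusion, I would take any $(\mathbf{W},\mathbf{V})$ feasible for (\ref{op1a}) and lift it to a feasible point of (\ref{op1b}) by setting $a_k[n]:=\gamma_k[n]$, $b_{j,k}[n]:=\gamma_{j,k}[n]$, and $\tau_k:=\max_j C^{E}_{j,k}(\mathbf{W}_k,\mathbf{V})$. Under these choices C6, C7, and C1b hold with equality, while C1a collapses exactly to C1 of (\ref{op1a}), which is satisfied by hypothesis; C2--C5 are untouched. This establishes one direction of the set equality with unchanged objective value.

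For the reverse inclusion, given any $(\mathbf{W},\mathbf{V},\boldsymbol{\tau},\boldsymbol{a},\boldsymbol{b})$ feasible for (\ref{op1b}), I aim to show C6 and C7 may be taken tight without loss of generality, since then $R_k(\mathbf{a}_k)=R_k(\mathbf{W}_k,\mathbf{V})$, $V_k(\mathbf{a}_k)=V_k(\mathbf{W}_k,\mathbf{V})$, and $C^{E}_{j,k}(\mathbf{b}_{j,k})=C^{E}_{j,k}(\mathbf{W}_k,\mathbf{V})$, so C1a combined with C1b collapses exactly to C1 of (\ref{op1a}). Tightness of C7 is direct: both $\log(1+b)$ and $1-(1+b)^{-2}$ are strictly increasing in $b$, so $C^{E}_{j,k}(\mathbf{b}_{j,k})$ is monotone in each $b_{j,k}[n]$, and replacing $b_{j,k}[n]$ by $\gamma_{j,k}[n]$ can only decrease it and therefore preserves C1b. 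The main obstacle is tightness of C6: raising $a_k[n]$ to $\gamma_k[n]$ increases both $R_k$ and $V_k$, so preservation of C1a is not automatic and the argument hinges on monotonicity of the difference $R_k-V_k$ in each $a_k[n]$. I would resolve it by computing $\partial(R_k-V_k)/\partial a_k[n]$ and verifying that in the URLLC regime of interest (large $\bar{n}$, moderate SINR) the logarithmic term dominates the dispersion penalty, making the derivative nonnegative; this ensures that replacing $a_k[n]$ by $\gamma_k[n]$ preserves C1a. Combining the two inclusions proves the theorem.
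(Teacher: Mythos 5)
Your overall strategy differs from the paper's: you argue for equality of the feasible sets projected onto $(\mathbf{W},\mathbf{V})$, whereas the paper argues only at optimality, invoking the monotonic structure of (\ref{op1b}) to conclude that $\mathrm{C6}$ and $\mathrm{C7}$ must hold with equality at the optimum (if $a_{k}[n]<\gamma_{k}[n]$ at an optimal point, the transmit power can be reduced until equality holds, contradicting optimality). Your forward inclusion and your treatment of $\mathrm{C7}$ are correct and consistent with the paper's sketch: ${C}^{E}_{j,k}$ is increasing in each $b_{j,k}[n]$, so $b_{j,k}[n]$ can be lowered to $\gamma_{j,k}[n]$ without affecting $\mathrm{C1b}$ or the objective.

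The genuine gap is exactly where you place it, and your proposed fix does not close it. The set-equality claim for $\mathrm{C6}$ requires $R_{k}(\mathbf{a}_{k})-V_{k}(\mathbf{a}_{k})$ to be nondecreasing in each $a_{k}[n]$, and this fails in general: differentiating gives
\[
\frac{\partial}{\partial a_{k}[n]}\bigl(R_{k}-V_{k}\bigr)=\frac{\bar{n}\log(\mathrm{e})}{1+a_{k}[n]}-\frac{Q^{-1}(\epsilon_{k})\,\bar{n}}{(1+a_{k}[n])^{3}\bigl(\bar{n}\sum_{m}V_{k}[m]\bigr)^{\frac{1}{2}}},
\]
and the second term diverges as the SINRs approach zero (since then $V_{k}[m]\to 0$), so the difference is strictly decreasing near the origin. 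Consequently there can exist feasible points of (\ref{op1b}) whose $(\mathbf{W},\mathbf{V})$ component violates $\mathrm{C1}$ of (\ref{op1a}); the projected feasible sets are not equal in general, and appealing to ``the URLLC regime of interest'' is an assumption external to the constraint set, not a proof. To repair the argument you would either have to impose and justify the condition $\log(\mathrm{e})(1+a_{k}[n])^{2}\bigl(\bar{n}\sum_{m}V_{k}[m]\bigr)^{1/2}\geq Q^{-1}(\epsilon_{k})$ as a standing assumption (which is what is implicitly done whenever the normal approximation is treated as increasing in the SINR), or retreat to the paper's weaker but sufficient claim: equivalence of the \emph{optimal solutions} rather than of the feasible sets, established by a perturbation argument showing that $\mathrm{C6}$ must be active at any optimum of (\ref{op1b}). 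Note that even that perturbation argument requires care here, since scaling down $\mathbf{W}_{k}[n]$ reduces the interference in $\gamma_{j,l}[n]$ for $l\neq k$ and thus tightens $\mathrm{C7}$ for the other users at the eavesdroppers.
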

\begin{proof}
Optimization problem (\ref{op1b}) can be formulated as a monotonic optimization problem. As a result, constraints $\mathrm{C6}$ and $\mathrm{C7}$ in (\ref{op1b}) have to hold with equality. Hence, it can be shown that (\ref{op1a}) and (\ref{op1b}) are equivalent. A more detailed proof is omitted here due to the space limitation. See \cite{ghanem2} for a similar proof.   
\end{proof}
 To facilitate the application of SCA\cite{lin1satellite,ghanem2,yan}, we use Taylor series to approximate for the non-convex terms in constraints $\mathrm{C1a}$ and $\mathrm{C1b}$ in (\ref{op1b}). This leads to the following convex constraints:
\begin{IEEEeqnarray}{lll}\label{c1da}\hspace{-1cm}
	\overline{\mathrm{C1a}}: R_{k}(\mathbf{a}_{k})-\tilde{V}_{k}(\mathbf{a}_{k})-\tau_{k}\geq B_{k}^{\text{req}}, \forall k,
\end{IEEEeqnarray}
\begin{IEEEeqnarray}{lll}\label{c1fb}\hspace{-2.8cm}
	\overline{\mathrm{C1b}}:\tau_{k}\geq \tilde{C}^{E}_{j,k}(\mathbf{b}_{j,k}), \forall j,k,
\end{IEEEeqnarray}
where
\begin{IEEEeqnarray*}{lll}\label{c1da}\hspace{-1cm}
	\tilde{V}_{k}(\mathbf{a}_{k})={V}_{k}(\mathbf{a}^{(i)}_{k})+\nabla_{\mathbf{a}_{k}}({V}_{k})(\mathbf{a}_{k}-\mathbf{a}_{k}^{(i)}), \forall k,
\end{IEEEeqnarray*}
\begin{IEEEeqnarray*}{lll}\label{c1db}\hspace{-0.6cm}
	\tilde{C}^{E}_{j,k}(\mathbf{b}_{j,k})={C}^{E}_{j,k}(\mathbf{b}_{j,k}^{(i)})+\nabla_{\mathbf{b}_{j,k}}({C}^{E}_{j,k})(\mathbf{b}_{j,k}-\mathbf{b}_{j,k}^{(i)}), \forall j,k,
\end{IEEEeqnarray*}
and $\mathbf{a}^{(i)}_{k}$ and $\mathbf{b}^{(i)}_{j,k}$ are the feasible points from the previous SCA iteration. 

Next, we tackle the non-convex constraints $\mathrm{C6}$ and $\mathrm{C7}$. For $\mathrm{C6}$, we define slack optimization variables $q_{k}[n],\forall k, n$, and $z_{k}[n], \forall k, n,$ to lower bound the numerator and to upper bound the denominator of $\gamma_{k}[n], \forall k,n,$ in constraint $\mathrm{C6}$, respectively, as follows:  
\begin{equation}\label{qq1}\hspace{-4cm}
\mathrm{C6a}: \Tr(\mathbf{H}_{k}^{H}\mathbf{W}_{k}[n])\geq {q}_{k}^{2}[n], \forall k,n, 
\end{equation}
\begin{multline}\label{qq2}\hspace{-0.4cm} 
\mathrm{C6b}:\sum_{k\neq  l}\Tr(\mathbf{H}_{k}^{H}\mathbf{W}_{l}[n])\\+\Tr(\mathbf{H}^{H}_{k}\mathbf{V}[n])+1\leq z_{k}[n], \forall k,n,
\end{multline}
\vspace{-0.4cm}
\begin{IEEEeqnarray}{lll}\label{qq3}\hspace{-5.7cm}
\mathrm{C6c}: \frac{{q}_{k}^{2}[n]}{z_{k}[n]}\geq a_{k}[n],\forall k,n.
\end{IEEEeqnarray}
The new constraints $\mathrm{C6a}$ and $\mathrm{C6b}$ are convex, however, constraint $\mathrm{C6c}$ in (\ref{qq3}) is still non-convex. Thus, we use a Taylor series to get a first order approximation as follows:  
\begin{multline}\label{Tay1}\hspace{-0.4cm}
\overline{\mathrm{C6c}}:2\left(q^{(i)}_{k}[n]/z^{(i)}_{k}[n]\right)q_{k}[n]\\-{\left(q^{(i)}_{k}[n]/z^{(i)}_{k}[n]\right)}^{2}z_{k}[n]\geq a_{k}[n],\forall k,n,
\end{multline}
where $q^{(i)}_{k}[n]$ and $z^{(i)}_{k}[n]$ are feasible points from the previous SCA iteration. 

Similarly, for non-convex constraint $\mathrm{C7}$, we introduce auxiliary variables $f_{j,k}[n], \forall j,k,n,$ to obtain the following equivalent constraints:
\begin{multline}\label{s1}\hspace{-0.4cm}
\mathrm{C7a:} b_{j,k}[n]\left(\sum_{k\neq l}\Tr(\mathbf{G}_{e}^{H}\mathbf{W}_{l}[n])+\Tr(\mathbf{G}^{H}_{e}\mathbf{V}[n])+1\right)\\\geq \big(f_{j,k}[n]\big)^{2},\forall j,k,n,
\end{multline}
\begin{IEEEeqnarray}{lll}\label{s2}\hspace{-3.5cm}
\mathrm{C7b:}	\big(f_{j,k}[n]\big)^{2} \geq
	\Tr(\mathbf{G}_{e}^{H}\mathbf{W}_{k}[n]),\forall j,k,n,
\end{IEEEeqnarray}
 and by using the S-Procedure\cite{Boyed}, we can transform constraint $\mathrm{C7a}$ in (\ref{s1}) into the following positive semi-definite constraint:  
\begin{multline}\label{gama1}
	 \mathrm{C7a:} \begin{bmatrix} b_{j,k}[n] & f_{j,k}[n] \\ f_{j,k}[n] & \sum_{k\neq l}\Tr(\mathbf{G}_{j}^{H}\mathbf{W}_{l}[n])+\Tr(\mathbf{G}^{H}_{j}\mathbf{V}[n])+1 \end{bmatrix}\\
	 \geq  \mathbf{0},\forall j,k,n.
\end{multline}
Now, we use again a Taylor series to obtain a first order approximation of the right hand side of (\ref{s2}) as follows:
 \begin{multline}
	\overline{\mathrm{{C7b}}}:\big(f_{j,k}^{(i)}[n]\big)^{2}+2f_{j,k}^{(i)}[n](f_{j,k}[n]-f^{(i)}_{j,k}[n])\\ \geq
	\Tr(\mathbf{G}_{j}^{H}\mathbf{W}_{k}[n]),\forall j,k,n,
\end{multline}
where $f_{j,k}^{(i)}[n], \forall j,k,n,$ are feasible points from the previous SCA iteration. 

The only remaining non-convex constraint in (\ref{op1b}) is the rank constraint. By applying SDR, i.e., by dropping the rank constraint, we obtain the following relaxed optimization problem:
\begin{IEEEeqnarray}{lll}\label{op1c}& \hspace{-0.5cm} \underset { \mathbf {W},\mathbf {V} \in \mathbb{H}_{N_{T}}, \boldsymbol{\tau}, \boldsymbol{a}, \boldsymbol{b}, \boldsymbol{q}, \boldsymbol{z}, \boldsymbol{f}}{ \mathop {\mathrm {minimize}}\nolimits }~ G(\mathbf {W},\mathbf {V}) \\& \;\;{s.t.}~ \overline{\mathrm{C1a}}, \overline{\mathrm{C1b}}, \mathrm{C2}, \mathrm{C3}, \mathrm{C4}, \mathrm{C6a}, \mathrm{{C6b}}, \overline{\mathrm{C6c}}, \mathrm{C7a}, \overline{\mathrm{C7b}} \nonumber,   
\end{IEEEeqnarray}
where $\boldsymbol{q}$, $\boldsymbol{z}$, and $\boldsymbol{f}$ denote the collection of optimization variables $q_{k}[n], \forall k,n,$ $z_{k}[n], \forall k,n,$ and $f_{j,k}[n], \forall j,k,n,$ respectively. The convex optimization problem in (\ref{op1c}) can be efficiently solved by standard convex solvers such as CVX \cite{cvx}. A solution of problem (\ref{op1b}) can be found by solving (\ref{op1c}) in an iterative manner, where the solution of (\ref{op1c}) in iteration ${i}$ is used as the initial point for the next iteration ${i}+1$. This leads to a sequence of improved feasible solutions until convergence to a sub-optimal solution (stationary point) of problem (\ref{op1c}), or equivalently problem (\ref{op1a}), in polynomial time \cite{yan, ghanem1, ghanem2,localcon}. However, in general, it is difficult to find initial points satisfying the constraints in (\ref{op1c}). Therefore, we address this issue by penalizing optimization problem (\ref{op1c}) in the following subsection. 
\subsection{Proposed Penalized Algorithm}
In order to solve (\ref{op1c}) using SCA, we require feasible initial points that satisfies the constraints in (\ref{op1c}), especially constraint  $\mathrm{C1a}$. Since it is not easy to find such feasible initial points, we propose an algorithm which is based on penalizing optimization problem (\ref{op1c}) when the constraints are violated. The basic idea is to relax the considered problem by adding slack variables $\theta_{k} \geq 0, \forall k,$
 to  constraint $\mathrm{C1a}$ and penalizing the sum of the violations of the constraints. Thereby, using this technique, optimization problem (\ref{op1c}) can be rewritten as follows:
\begin{IEEEeqnarray}{lll}\label{op1cb}& \hspace{-0.5cm} \underset { \mathbf {W},\mathbf {V} \in \mathbb{H}_{N_{T}}, \boldsymbol{\tau}, \boldsymbol{a}, \boldsymbol{b}, \boldsymbol{q}, \boldsymbol{z}, \boldsymbol{f}, \boldsymbol{\theta}}{ \mathop {\mathrm {minimize}}\nolimits }~ G(\mathbf {W},\mathbf {V})+\beta^{(i)}\sum_{k=1}^{K}\theta_{k} \\& \mathrm {s.t.}~ \overline{\mathrm{C1a}}: R_{k}(\mathbf{a}_{k})-\tilde{V}_{k}(\mathbf{a}_{k})-\tau_{k}+\theta_{k}\geq B_{k}^{\text{req}}, \forall k, \nonumber\\& \overline{\mathrm{C1b}}, \mathrm{{C2}}, \mathrm{{C3}}, \mathrm{{C4}}, \mathrm{{C6a}}, \mathrm{{C6b}}, \overline{\mathrm{C6c}}, \mathrm{{C7a}}, \overline{\mathrm{C7b}} \nonumber,\\& \mathrm{C8}:\theta_{k} \geq 0, \forall k,  \nonumber 
\end{IEEEeqnarray}
where $\beta^{(i)}$ is the penalizing weight in iteration $i$,  and $\boldsymbol{\theta}$ is the collection of slack variables $\theta_{k}, \forall k$.  An iterative algorithm for solving (\ref{op1}) by repeatedly solving (\ref{op1cb}) is provided in \textbf{Algorithm} \ref{sca}. In the first iteration, by choosing a small penalty weight $\beta^{(1)}>0$, we allow the QoS constraint to be violated such that the feasible set is large. Then, in each subsequent iteration ${i}$, we use the solution from the previous iteration as initial point, increase the penalty weight $\beta^{(i)}$, and solve problem (\ref{op1cb}) again. Continuing this iterative procedure eventually yields solutions where $\theta_{k}=0, \forall k,$ holds, i.e., (\ref{op1c}) becomes equivalent to (\ref{op1cb}). Moreover, it was shown in \cite{pccp} that for sufficiently large values of $\beta_{\text{max}}$, \textbf{Algorithm} \ref{sca} will yield the optimal solution for problem (\ref{op1c}). The maximum value $\beta_{\text{max}}$ for the penalty weight is imposed to avoid numerical instability. 

\begin{algorithm}[t]
	\caption{Penalized Successive Convex Approximation}
	1: {Initialize:} random initial points $\boldsymbol{a}^{(1)}$, $\boldsymbol{b}^{(1)}$, $\boldsymbol{q}^{(1)}$, $\boldsymbol{z}^{(1)}$, $\boldsymbol{f}^{(1)}$, set iteration index $i=1$, and maximum number of iterations $I_{\text{max}}$, initial penalty factor $\beta^{(1)}\gg {1}$, $\beta_{\text{max}}$, $\eta >1$.\\
	2: \textbf{Repeat}\\
	3: Solve convex problem (\ref{op1cb}) for given   $\boldsymbol{a}^{(i)}$, $\boldsymbol{b}^{(i)}$, $\boldsymbol{q}^{(i)}$,  $\boldsymbol{z}^{(i)}$, and $\boldsymbol{f}^{(i)}$, and store the intermediate solutions  $\boldsymbol{a}, \boldsymbol{b}, \boldsymbol{q}, \boldsymbol{z},$ and $\boldsymbol{f}$\\
	4: Set ${i}={i}+1$ and update  $\boldsymbol{a}^{({i})}=\boldsymbol{a}$, $\boldsymbol{b}^{({i})}=\boldsymbol{b}$, $\boldsymbol{q}^{({i})}=\boldsymbol{q}$, $\boldsymbol{z}^{({i})}=\boldsymbol{z}$,
	$\boldsymbol{f}^{({i})}=\boldsymbol{f}$, and 
	 $\beta^{(i)}=\min(\eta\beta^{(i-1)},\beta_{\text{max}})$. \\
	6: \textbf{Until} convergence or $i=I_{\text{max}}$.\\
	7: {Output:} $\mathbf{W}^{*}=\mathbf{W}$ and $\mathbf{V}^{*}=\mathbf{V}$.
	\label{sca}
\end{algorithm}
\begin{remark}
We note that the optimal solution of $\mathbf{W}_{k}[n], \forall k,n,$ obtained with \textbf{Algorithm} \ref{sca} is not guaranteed to have a rank equal to or smaller than one, which mandates the use of rank-one approximation or Gaussian randomization procedures \cite{semirank}. In this paper, Gaussian randomization is used to obtain $\mathbf{w}_{k}[n], \forall k,n,$ if the solution $\mathbf{W}_{k}[n], \forall k,n,$ has a rank higher than one. 
\end{remark}
\section{Performance Evaluation}\vspace{-0.1cm}
In this section, we provide simulation results to evaluate the effectiveness of the proposed resource allocation algorithm design for secure MISO-URLLC systems. In our simulations,  a single cell is considered with inner radius $r_{1}=50~\textrm{m}$ and outer radius $r_{2}=500~\textrm{m}$. The BS is located at the center of the cell. The URLLC users and eavesdroppers are randomly distributed within the inner and the outer radius. The path loss is calculated as  $35.3 + 37.6 \log_{10}(d)$\cite{chsecross}, where $d$ is the distance from the BS to a receiver. The system bandwidth is set to $W= 1~\textrm{MHz}$ and the frame duration is $T_{f}=0.25~\textrm{ms}$. The noise power spectral density is $-173~\textrm{dBm/Hz}$. The parameters of \textbf{Algorithm} \ref{sca} are set as $\beta^{(1)}=1000$, $\beta_{\text{max}}=5000$, and $\eta=1.5$. The channel gains follow independent Rayleigh distributions. All simulation results are averaged over $1000$ realizations of the channels gains and path losses.  
\subsection{Performance Bound and Benchmark Schemes}\vspace{-0.1cm}
 We compare the performance of the proposed resource allocation algorithm design with an upper bound, the proposed algorithm without AN, and a baseline scheme:
\begin{itemize}
	\setlength{\itemsep}{1pt}
	\item {\textbf{Secrecy capacity}}: In this scheme, the secrecy capacity formula for infinite blocklength is used for optimization in (\ref{op1}), i.e., the dispersions in constraint $\mathrm{C1}$ are set to zero, but all other constraints are retained. The formulated optimization problem is solved using SCA to find a sub-optimal solution. This scheme provides a lower bound on the total required transmit power of the system.
		\item {\textbf{Proposed scheme without AN}}: In this scheme, the optimization problem is formulated based on the normal approximation. However, AN is not included in the problem formulation. The formulated optimization problem is solved using SCA to find a sub-optimal solution using an algorithm similar to \textbf{Algorithm} 1.
	\item {\textbf{Baseline scheme:}}
	In this scheme, we employ  maximum ratio transmission beamforming (MRT-BF), where $\mathbf{w}_{k}[n]=\sqrt{p_{k}[n]}\frac{\mathbf{h}_{k}}{||\mathbf{h}_{k}||}$. Then, we optimize the power allocated to $p_{k}[n]$. AN is not injected at the BS. The formulated optimization problem is solved using SCA to find a sub-optimal solution using an algorithm similar to \textbf{Algorithm} 1.
\end{itemize}
\subsection{Simulation Results}
Fig.~\ref{fig1} shows the convergence of the proposed algorithm for different numbers of users $K$, different numbers of antennas $N_{T}$, and different numbers of eavesdroppers $J$. We show the total transmit power as a function of the number of iterations for a given channel realization. As can be observed from Fig.~\ref{fig1}, the proposed algorithm converges to a sub-optimal solution after a finite number of iterations. In particular, the proposed algorithm converges after approximately $5$ iterations for all considered parameter values. Moreover, as can be seen, for the considered cases, the speed of convergence of the  proposed algorithm is not sensitive to the numbers of users, antennas, and eavesdroppers.  
\begin{figure}[t]
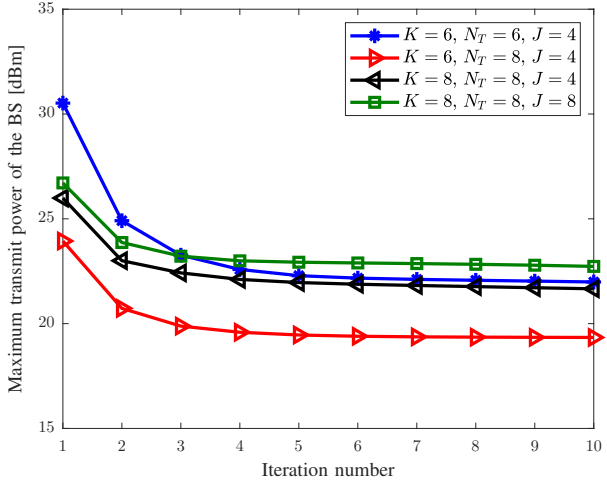

	\centering
		\resizebox{1\linewidth}{!}{\psfragfig{fig1}}\vspace{-4mm}
		\caption{Convergence of the proposed algorithm (\textbf{Algorithm} \ref{sca})  for different simulation parameters. $N=4,$ $\epsilon_{k}=10^{-6}, \forall k,$ $\delta=\delta_{j,k}=10^{-6}, \forall j,k,$ $D_{1}=2$, $D_{k}=4,~\forall k \neq 1$.}
		\label{fig1}
\end{figure}

In Fig.~\ref{ant}, we investigate the average transmit power versus the number of antennas at the BS, $N_{T}$, for different resource allocation schemes. As can be observed, the total transmit power at the BS significantly decreases as the number of antennas at the BS increases. This is due to the fact that more antennas offer additional degrees of freedom for resource allocation which facilitate higher received SINRs at the users. The proposed scheme attains large power savings compared to the baseline scheme. The performance loss of the baseline scheme has two reasons. First, the fixed beamformer is strictly sub-optimal. Second, the baseline scheme does not have the capability to impair the eavesdroppers' channel due to the absence of AN. Moreover, the proposed scheme without AN still achieves a good performance compared with the baseline scheme due to the precise beamforming. Furthermore, if the secrecy capacity is used for resource allocation design for URLLC, the required latency,  reliability, and secrecy cannot be guaranteed. This is due to the fact that the performance loss incurred by finite block length coding is not taken into account, and the obtained resource allocation policies do not meet the QoS constraints. 
\begin{figure}[t]
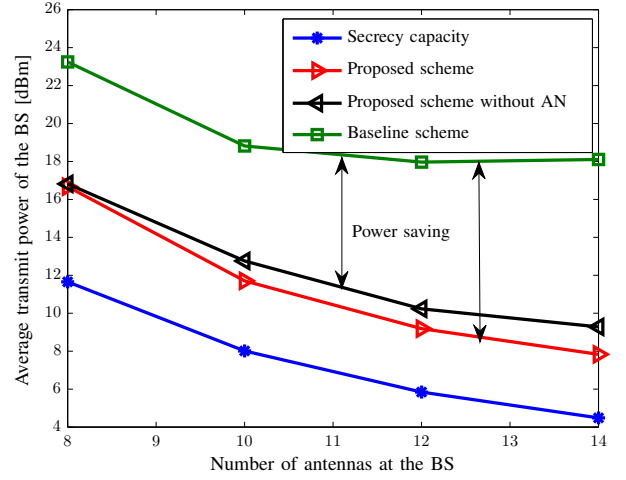

	\centering
	\resizebox{1\linewidth}{!}{\psfragfig{figan}}\vspace{-4mm}
	\caption{Average transmit power versus number of antennas at the BS. $N=4,$ $K=4,$ $J=2,$ $\epsilon_{k}=10^{-7}, \forall k,$ $\delta_{j,k}=10^{-6}, \forall j,k,$ $D_{1}=2,$ $D_{k}=4,~\forall k \neq 1$.} 
	\label{ant}
\end{figure}

In Fig.~\ref{bits}, we show the average transmit power for the proposed scheme versus the required number of secure communication bits, $B^{\text{req}}_{k}, \forall k,$  for different system parameters. In particular, we study the impact of the number of eavesdroppers, $J$, the secrecy constraint on the information leakage, $\delta=\delta_{j,k}, \forall j,k$, and different delay requirements. We consider the following delay scenarios: For delay scenario $S_{1}=\{D_{1}=2, D_{k}=4, \forall k \neq 1\}$, one user has strict delay constraints while the remaining users do not. For delay scenario $S_{2}=\{D_{k}=2, \forall k \in \{1,2,3,4\}, D_{5}=D_{6}=4\}$, four users have strict delay requirements. As expected, increasing the required number of transmitted bits leads to higher transmit powers. This is due to the fact that if more bits are to be transmitted in a frame, higher SINRs are needed for each user, and thus, the BS has to increase the transmitted power. Fig.~\ref{bits} also shows that if the number of eavesdroppers increases, the average transmitted power has to also increase to meet the users' QoS. This stems from the fact that if more eavesdroppers are in the system, the BS has to increase the amount of AN power to degrade the channels of all eavesdroppers to guarantee secrecy. Moreover, as can be observed, the proposed scheme is able to guarantee secure communication even if the number of eavesdroppers exceeds the number of transmit antenna $N_{T}$, due to the precise beamforming and AN design. Furthermore, more stringent constraints on the information leakage $\delta$ increases the required transmit power because more AN power is needed to impair the channel of eavesdroppers to a sufficient degree.

Fig.~\ref{bits} also reveals the effect of the delay constraints. In particular, delay scenario $S_{2}$ leads to a higher power consumption compared to $S_{1}$ because the BS is forced to allocate more power to the  delay sensitive users even if their channel conditions are poor to ensure their transmissions are completed with the desired delay.
\begin{figure}[t]
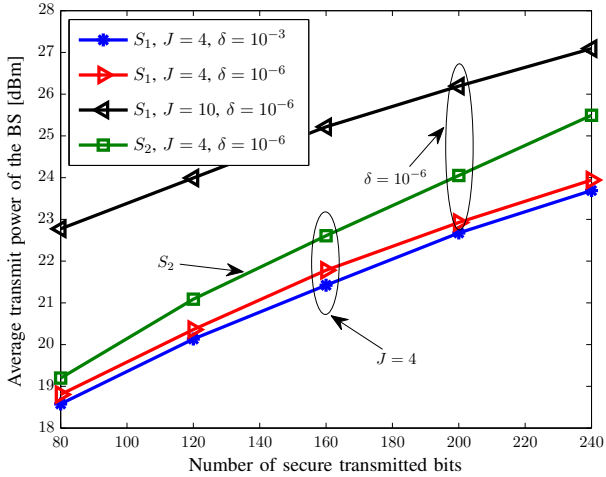

	\centering
		\resizebox{1\linewidth}{!}{\psfragfig{bits}}\vspace{-4mm}
	\caption{Average transmit power versus number of secure bits per packet. $K=6$, $N_{T}=8$, $N=4,$ $\epsilon_{k}=10^{-7}, \forall k$.} 
	\label{bits}
\end{figure}
\section{Conclusion}
In this paper, we studied the resource allocation algorithm design for secure multi-user downlink MISO-URLLC systems. To enhance PLS, AN was injected by the BS to impair the channel of the eavesdroppers. To meet the stringent delay requirements of URLLC users, short packet transmission was adopted and taken into consideration for secure resource allocation design. The resource allocation algorithm design was formulated as an optimization problem for minimization of the total transmit power subject to QoS constraints ensuring the reliability, secrecy, and latency of the URLLC users. The obtained optimization problem was shown to be non-convex but a low-complexity algorithm based on penalized SCA was developed to find a sub-optimal solution. Simulation results showed that the proposed resource allocation algorithm design facilities secure transmission in URLLC systems, and yields a large reduction of the transmit power compared to a baseline scheme. 
\bibliography{ref}  
\bibliographystyle{IEEEtran}
\end{document}